\newcommand{\bea}{\begin{eqnarray}}	
\newcommand{\eea}{\end{eqnarray}}
\DeclareMathOperator{\Tr}{Tr}
\newtheorem{proposition}{Proposition}[section]
\newtheorem{theorem}{Theorem}[section]
\theoremstyle{definition}
\begin{document}

\title{The Wilson loop in the Gaussian Unitary Ensemble}
 
\author{
 \ Razvan Gurau\footnote{rgurau@cpht.polytechnique.fr, 
 Centre de Physique Th\'eorique, \'Ecole polytechnique, CNRS, Universit\'e Paris-Saclay, F-91128 Palaiseau, France
 and Perimeter Institute for Theoretical Physics, 31 Caroline St. N, N2L 2Y5, Waterloo, ON, Canada.}
} 

\maketitle

\begin{abstract} 
Using the supersymmetric formalism we compute exactly at finite $N$ the expectation of the Wilson loop in the Gaussian Unitary Ensemble 
and derive an exact formula for the spectral density at finite $N$. 
We obtain the same result by a second method relying on enumerative combinatorics and show that it 
leads to a novel proof of the Harer-Zagier series formula. 
\end{abstract}


\section{Introduction}

The Gaussian Unitary Ensemble (GUE) is among the best known classical random matrix ensembles \cite{Gui}.
Like in other random matrix ensembles, all the interesting quantities in the GUE
(\emph{e.g.} the eigenvalue density) have a controlled $1/N$ expansion, where $N$ is the size of the matrix.
This expansion can be systematically explored order by order:  in particular the eigenvalue density
converges in the large $N$ limit to the Wigner semicircle law  \cite{Wigner}.

The GUE has been studied by a variety of methods, from orthogonal polynomials and combinatorial techniques  \cite{DiFrancesco:1993nw} to the 
supersymmetric formalism \cite{efetov1983supersymmetry,disertori2003random}. One of the most striking features of 
the supersymmetric formalism is that one is able to find integral expressions (for instance for the spectral density)
which depend only parametrically on $N$ and are adapted to a saddle point analysis. The $1/N$ expansion 
is systematically recovered by computing the corrections to the leading saddle point result \cite{shamis2013density}.

In this paper we do not use this standard saddle point analysis, but rather use 
the supersymmetric formalism to derive exact results at finite $N$ in the GUE. We first obtain in Section \ref{sec:WL}
an exact expression for the expectation of the Wilson loop observable. Using this result,
we obtain in Section \ref{sec:spectden} an exact expression for the spectral density in the GUE at finite $N$ and discuss
its $1/N$ expansion. In the second part of this paper, section \ref{sec:combi}, we obtain the expectation of the Wilson loop by a second method
using a version of the BEST theorem \cite{aardenne1951circuits,tutte1941unicursal} in combinatorics adapted to undirected multi graphs
and find a novel proof of the Harer-Zagier series formula \cite{DBLP:journals/jct/ChapuyFF13}.

\section{The Wilson loop in the GUE}\label{sec:WL}

Let $H$ be a $N\times N$ Hermitian matrix and let us denote:
\[
 [ dH ] \equiv \prod_{a} \frac{ \sqrt{N} dH_{aa}}{\sqrt{2\pi }} \prod_{a<b} \frac{N dH_{ab}d\bar H_{ab}}{ 2 \pi \imath} \; , \qquad 
 \Braket{f(H)}_{GUE} \equiv \int [dH] \; e^{-\frac{N}{2}\Tr[H^2] } \; f(H) \;,
\]
with $f$ any function of $H$.
The expectation of the Wilson loop observable in the GUE is:
\begin{equation}\label{eq:WilsonLoop}
 I(t,N) \equiv \Braket{\frac{1}{N} \Tr\left[ e^{\imath t H}\right] }_{GUE} =  \int [dH] \; e^{-\frac{N}{2} \Tr[H^2]} \; \frac{1}{N} \Tr\left[ e^{\imath t H}\right] \;,
\end{equation}
where $t$ is a complex number. The main result of this paper is an exact formula for this expectation.
\begin{theorem}\label{thm:WL}
 The expectation of the Wilson loop in the GUE is:
 \[
\boxed{   I(t,N)
  = e^{-\frac{t^2}{2N}}\left[ \sum_{q=0}^{N-1}  \frac{1}{N^{q+1}} \binom{N}{q+1} \frac{ \left(-t^2 \right)^q } {q!}\right]  \;. }
 \]
\end{theorem}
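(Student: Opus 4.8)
The plan is to bypass the matrix integral entirely and reduce to eigenvalues. Since $\frac{1}{N}\Tr[e^{\imath t H}]=\frac{1}{N}\sum_i e^{\imath t\lambda_i}$ is a symmetric function of the eigenvalues $\lambda_1,\dots,\lambda_N$, its GUE expectation is $I(t,N)=\int_{\RR}\rho_N(\lambda)\,e^{\imath t\lambda}\,d\lambda$, where $\rho_N$ is the mean (normalized) eigenvalue density. In other words $I(t,N)$ is the Fourier transform of $\rho_N$, so the whole problem reduces to knowing $\rho_N$ explicitly and transforming it term by term.

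First I would invoke the standard orthogonal-polynomial analysis of the joint law $\prod_{i<j}(\lambda_i-\lambda_j)^2\,e^{-\frac{N}{2}\sum_i\lambda_i^2}$: expanding the Vandermonde determinant against the orthonormal functions $\tilde\psi_n$ for the weight $e^{-\frac{N}{2}\lambda^2}$ collapses the one-point function to the Christoffel--Darboux kernel on the diagonal, $\rho_N(\lambda)=\frac{1}{N}\sum_{n=0}^{N-1}\tilde\psi_n(\lambda)^2$. The $\tilde\psi_n$ are rescaled Hermite (harmonic-oscillator) functions, so after the substitution $x=\sqrt{N/2}\,\lambda$ the transform of each summand becomes the characteristic function of the position distribution in the $n$-th oscillator state.

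The key identity I would then use is $\int_{\RR}\psi_n(x)^2\,e^{\imath k x}\,dx=e^{-k^2/4}\,L_n(k^2/2)$, with $L_n$ the Laguerre polynomial, which I expect to prove from the Hermite generating function (or Mehler's formula). Setting $k=t\sqrt{2/N}$ produces a common Gaussian factor $e^{-t^2/(2N)}$ and leaves $\frac{1}{N}\sum_{n=0}^{N-1}L_n(t^2/N)$. Summing with the telescoping identity $\sum_{n=0}^{N-1}L_n(y)=L_{N-1}^{(1)}(y)$ and expanding the associated Laguerre polynomial, $L_{N-1}^{(1)}(y)=\sum_{q=0}^{N-1}(-1)^q\binom{N}{N-1-q}\frac{y^q}{q!}$, I would finish by rewriting $\binom{N}{N-1-q}=\binom{N}{q+1}$ and setting $y=t^2/N$, which reproduces the boxed formula.

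The main obstacle is bookkeeping rather than conceptual: every normalization constant must stay consistent. The weight is $e^{-\frac{N}{2}\lambda^2}$, not $e^{-x^2}$, so the rescaling $x=\sqrt{N/2}\,\lambda$ and the accompanying factor $(N/2)^{1/4}$ in $\tilde\psi_n$ have to be tracked carefully, as does the Hermite-to-Laguerre Fourier identity itself. A sanity check at $N=1$, where $I(t,1)=e^{-t^2/2}$ follows from a single Gaussian integral, pins down the prefactor and guards against a stray factor of two. I would also note two alternative routes --- expanding $e^{\imath t H}$ and evaluating the even GUE moments $\langle\Tr H^{2p}\rangle$ combinatorially, or the supersymmetric representation --- but the spectral-density route is the most economical.
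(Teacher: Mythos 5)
Your proposal is correct, but it takes a genuinely different route from the paper. You reduce to the mean eigenvalue density via the classical determinantal/Christoffel--Darboux machinery, $\rho_N(\lambda)=\frac{1}{N}\sum_{n=0}^{N-1}\tilde\psi_n(\lambda)^2$, then use the oscillator matrix-element identity $\int_{\RR}\psi_n(x)^2e^{\imath kx}\,dx=e^{-k^2/4}L_n(k^2/2)$ together with the Laguerre summation $\sum_{n=0}^{N-1}L_n(y)=L^{(1)}_{N-1}(y)$; with $k=t\sqrt{2/N}$ (so $k^2/4=t^2/2N$ and $k^2/2=t^2/N$) this gives $I(t,N)=e^{-\frac{t^2}{2N}}\frac{1}{N}L^{(1)}_{N-1}(t^2/N)$, and expanding $L^{(1)}_{N-1}(y)=\sum_{q=0}^{N-1}(-1)^q\binom{N}{q+1}\frac{y^q}{q!}$ reproduces the boxed formula exactly --- your bookkeeping of the rescaling $x=\sqrt{N/2}\,\lambda$ and the $(N/2)^{1/4}$ normalization is consistent, as the $N=1$ check confirms. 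This is essentially the orthogonal-polynomial derivation that the paper explicitly acknowledges as already existing in the literature \cite{Drukker:2000rr} and deliberately sets aside: the paper instead represents the resolvent $\omega_N(z)$ as a supersymmetric integral over $2N$ commuting and $2N$ Grassmann variables, collapses it by Hubbard--Stratonovich to an exactly computable integral over four supersymmetric scalar variables $A,B,C,D$, and then recovers $I(t,N)$ as an inverse Laplace transform evaluated by residues. Your route is shorter and relies only on standard special-function identities, but it leans on the determinantal structure of the GUE eigenvalue law, which is a special feature of this ensemble; the paper's purpose is precisely to show that the supersymmetric method --- normally used only for saddle-point asymptotics and applicable in settings lacking such structure --- also delivers exact finite-$N$ formulas. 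Note also that the paper derives the spectral density of Section \ref{sec:spectden} \emph{from} Theorem \ref{thm:WL}, whereas you run the logic in the opposite direction; there is no circularity, since your $\rho_N$ comes from the independent Christoffel--Darboux analysis, but a complete write-up should prove (not merely quote) the two Hermite--Laguerre identities, e.g.\ via Mehler's formula or generating functions, since they carry the entire weight of your argument.
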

\begin{proof}
 See section \ref{ssec:proof}
\end{proof}

This result has already been derived in the literature \cite{Drukker:2000rr} using orthogonal polynomials. We derive it here using the supersymmetric formalism.
This exact formula has several interesting limit cases $  I(0,N)  = 1 , \; I(t,1) = e^{-\frac{t^2}{2}} $ and:
\[
 I(t,\infty)  = \sum_{q=0}^{\infty} \frac{(- t^2)^q }{(q+1)! q! }  \;.
\]
A tight upper bound on the expectation of the Wilson loop is:
 \begin{align*}
   |I( t ,N)| & \le   \left| e^{-\frac{t^2}{2N}} \right|  \left( \sum_{q=0}^{\infty} \frac{|t|^{2q}}{ (q+1)! q! }  \right) = e^{-\frac{1}{2N} \Re(t^2) }  
  \sum_{q\ge 0} \frac{1}{(2q)!} \frac{(2q)!}{ (q+1)! q! } |t|^{2q}    \le e^{-\frac{1}{2N} \Re(t^2) } e^{2|t|} \; , \;\; \forall t\in \mathbb{C} \;.
\end{align*}

\subsection{Proof of Theorem \ref{thm:WL}}\label{ssec:proof}

We first review briefly supersymmetric Gaussian integrals.
Following \cite{mirlin2000statistics,disertori2003random}, let us consider $2N$ Grassmann variables $\chi_i,\bar \chi_i\;, i=1,\dots N$ with:
\begin{align*}
& \chi_i \chi_j = -\chi_j \chi_i \;,  \chi_i \bar \chi_j = - \bar \chi_j \chi_i  \;,  \bar \chi_i \bar \chi_j = - \bar \chi_j \bar \chi_i \;, \crcr
& \int d\chi_i \;\chi_i = \frac{\partial }{\partial \chi_i} \chi_i = 1= \int d\bar \chi_i \; \bar \chi_i = \frac{\partial }{\partial \bar \chi_i} \bar \chi_i  \;,
\end{align*}
and $2N$ commuting variables $\bar \phi_i,\phi_i\; i=1\dots N$. Denoting $[d\bar \chi d\chi] \equiv   \prod_{i=1}^N  d\bar \chi_i d\chi_i    $ and 
$  [d\bar \phi d\phi] \equiv   \prod_{i=1}^N  \frac{ d \bar \phi_i  d\phi_i }{2\pi } $,
we have for any (invertible) $N\times N$ matrix $M$:
\begin{align*}
&  \int [d\bar \phi d\phi]  \;  e^{-\bar \phi M \phi}  =\frac{1}{\det(M)}\;, 
\qquad  \int  [d\bar \chi d\chi] \;  e^{ -\bar \chi M \chi}   = \det(M) \;, \crcr 
& \int  [d\bar \chi d\chi]  [d\bar \phi d\phi]   \;  e^{-\bar \phi M \phi -\bar \chi M \chi}  \; \bar \phi_a \phi_b = (M^{-1})_{ba} \;. 
\end{align*}

For any matrices $A,D$ with commuting entries and $B,C$ with Grassmann entries, the 
supersymmetric Gaussian integral:
\begin{align*}
 \int [d\bar \chi d\chi] [d\bar \phi d\phi] \;\; e^{- \begin{pmatrix}
                                                 \bar \phi & \bar \chi
                                                \end{pmatrix} 
                                                 \begin{pmatrix}
                                                  A & B \\ C & D 
                                                 \end{pmatrix}
                                                 \begin{pmatrix}
                                                    \phi \\ \chi 
                                                 \end{pmatrix}
                                                 } = \int [d\bar \chi d\chi] [d\bar \phi d\phi]  e^{-\bar \phi A \phi -  \left( \bar \chi + \bar \phi B D^{-1} \right) D \left( \chi   + D^{-1} C \phi  \right) + \bar \phi B D^{-1} C \phi} \;,
\end{align*}
is computed by changing variables to $\bar \psi = \bar \chi + \bar \phi B D^{-1}  $ and $  \psi =   \chi +  D^{-1} C \phi $ and equals the inverse of the Berezinian \cite{mirlin2000statistics,disertori2003random}:
\[
 \int [d\bar \chi d\chi] [d\bar \phi d\phi] \;\; e^{- \begin{pmatrix}
                                                 \bar \phi & \bar \chi
                                                \end{pmatrix} 
                                                 \begin{pmatrix}
                                                  A & B \\ C & D 
                                                 \end{pmatrix}
                                                 \begin{pmatrix}
                                                    \phi \\ \chi 
                                                 \end{pmatrix}
                                                 } =  \frac{\det D}{ \det (A - BD^{-1}C)} \;.
\]
In this formula $ \left[ \det (A - BD^{-1}C)\right]^{-1}$ is understood as a polynomial in the entries of $ B $ and $C$ starting
with a $ \left[ \det(A)\right]^{-1} $ term.

\paragraph{Integral representation of the resolvent.}
The expectation of the resolvent
can be seen as the Laplace transform of the expectation of the Wilson loop:
\[
 \omega_N(z) = \frac{1}{N} \Braket{ \frac{1}{z-\imath H} }_{GUE} = \int_0^{\infty} dt \; e^{-zt} I(t,N) \;,
\]
where the Laplace transform converges for $\Re(z)>0$. 
Observe that the resolvent as defined here differs by an overall constant factor from the standard definition \cite{DiFrancesco:1993nw}.
The Wilson loop is then the inverse Laplace transform of the resolvent:
\[
  I(t,N) = \int_{\gamma -\imath \infty}^{\gamma+\imath \infty} \frac{ds}{2\pi \imath}   \; e^{st} \omega_N(s) \;,
\]
where $\gamma$ is a positive real number. 

We represent the resolvent as a supersymmetric integral \cite{spencer2012susy,disertori2003random}:
\begin{align*}
 \omega_N(z) = \frac{1}{N}\Braket{ \Tr\left[ \frac{1}{z-\imath   H } \right] }_{GUE}  
 =   \int   [d\bar \chi d\chi]  [d \bar \phi  d\phi] \; \frac{\left( \bar \phi \cdot \phi \right)}{N}     e^{  - z \bar \phi \cdot \phi - z \bar\chi \cdot \chi   } 
  \;   \Braket{ e^{ \imath   \sum_{a,b}( \bar \phi_a \phi_b + \bar\chi_a \chi_b )  H_{ab} } }_{GUE} \;,
\end{align*}
where $\bar \phi \cdot \phi \equiv \sum_{i=1}^N \bar \phi_i \phi_i $, $ \bar\chi \cdot \chi   \equiv \sum_{i=1}^N \bar \chi_i \chi_i$, and the integrals 
over the commuting variables converge as $\Re(z) > 0$. Observe that for any fixed matrix $M$ we have:
\[
 \Braket{ e^{ \sum_{a,b}M_{ab} H_{ab}} }_{GUE} = e^{\frac{1}{2N} \sum_{a,b} M_{ab}  M_{ba} }\;,
\]
hence, using:
\[
 ( \bar \phi_a \phi_b + \bar\chi_a \chi_b )( \bar \phi_b \phi_a + \bar\chi_b \chi_a ) = \left( \bar \phi\cdot \phi \right)^2 -  \left( \bar \chi \cdot \chi \right)^2 +
 2 \left( \bar \chi \cdot \phi \right)  \left( \bar \phi \cdot \chi \right) \;,
\]
the expectation of the resolvent becomes:
\[
\omega_N(z) =  \int   [d\bar \chi d\chi]  [d \bar \phi  d\phi] \; \frac{1}{N} \left( \bar \phi \cdot \phi \right)    e^{  - z \bar \phi \cdot \phi - z \bar\chi \cdot \chi  -
 \frac{1}{2N} \left[ \left( \bar \phi\cdot \phi \right)^2 -  \left( \bar \chi \cdot \chi \right)^2 + 2 \left( \bar \chi \cdot \phi \right)  \left( \bar \phi \cdot \chi \right) \right] } \;.
\]

The key idea is that the resolvent, which for now is expressed as an integral over $2N$ commuting and $2N$ anti commuting variables can be expressed as an integral over only $2$ commuting 
variables. The fist step consists in using the Hubbard-Stratonovich transformation:
\begin{align*}
& e^{-\frac{1}{2N} \left( \bar \phi\cdot \phi \right)^2    } = \int [dA] \;  e^{-\frac{N}{2} A^2 + \imath  A \left( \bar \phi\cdot \phi \right)  } \;, \qquad 
 e^{ \frac{1}{2N}  \left( \bar \chi \cdot \chi \right)^2  } = \int [dD] \;  e^{-\frac{N}{2} D^2 +    D \left( \bar \chi\cdot \chi \right)  } \;,
 \crcr
& e^{ - \frac{1}{N}  \left( \bar \chi \cdot \phi \right)  \left( \bar \phi \cdot \chi \right) } = 
\int[d B dC] e^{- N B C +  \left( \bar \chi \cdot \phi \right)  C -     B \left( \bar \phi \cdot \chi \right)  } \;,
\end{align*}
where $[dA] \equiv \frac{\sqrt{N}dA}{\sqrt{ 2\pi} }$, $[dD] \equiv \frac{\sqrt{N}dD}{\sqrt{ 2\pi } }$ and $ \int [dBdC] \equiv \frac{1}{N} \partial_B\partial_C $
such that all the Gaussian integrals are normalized.
The crucial point is that $A,B,C,D$ are now only one dimensional (commuting and anti commuting) variables.
The expectation of the resolvent then becomes:
\begin{align*}
\omega_N = & \int [d\bar \chi d\chi]  [d \bar \phi  d\phi]  [dA][dD] [dBdC]  \;\;  \;
\frac{\left( \bar \phi \cdot \phi \right) }{N}   \times  \crcr
&  \qquad \times
\exp{ \left\{ -\frac{N}{2}   A^2  - \frac{N}{2} D^2 - N BC - 
\begin{pmatrix}
 \bar \phi & \bar \chi 
\end{pmatrix}
\left[ 
\begin{pmatrix}
  z-  \imath     A   &      B \\ 
  -       C  & z -       D
\end{pmatrix} \otimes I_N 
\right]
\begin{pmatrix}
   \phi  \\   \chi 
\end{pmatrix}
\right\} }
\; ,
\end{align*}
where $I_N$ is the identity $N\times N$ matrix. The integral over $\bar \phi,\phi, \bar \chi, \chi$ is now factored and can be directly evaluated:
 \[
\omega_N(z) =  \int [dA][dD] [dBdC]  e^{-\frac{N}{2} A^2  - \frac{N}{2} D^2  - N BC  }
   \frac{1}{N} \left[ \frac{1}{\imath   } \frac{\partial}{\partial A} \right] 
\left(  \frac{  \left( z   -    D \right) }{   \left[ z -\imath   A  +   B  \left( z -    D \right)^{-1}  C \right] } \right)^N  \; .
 \]
Expanding the polynomial in $B$ and $C$, we obtain:
\[
 \omega_N =  \int [dA][dD][dBdC]  e^{-\frac{N}{2} A^2  - \frac{N}{2} D^2 -NBC } 
 \frac{1}{N} \left[ \frac{ 1 }{  i  } \frac{\partial}{\partial A} \right] 
 \left[  
\left( \frac{ z    -   D  }{  z -\imath     A } \right)^N 
\left( 1 -  \frac{ N  }{ \left( z -  \imath    A \right)  \left(z    -    D   \right)  } BC   \right) 
\right] \;,
\]
and computing the derivative with respect to $A$ we get:
\[
\omega_N =   \int [dA][dD] [dBdC] \; e^{-\frac{N}{2} A^2  - \frac{N}{2} D^2 -NBC  }  
  \left[  \frac{  \left(  z   -    D  \right)^N  }{ \left( z - \imath    A \right)^{N+1}  }
      - BC   (N+1)    \frac{  \left( z   -    D  \right)^{N-1}  } { \left( z -   \imath  A \right)^{N+2} }
  \right] \;.
\]
Finally, evaluating  Grassmann integral (taking into account its normalization) we have:
\[
\boxed{ \omega_N(z)  = \int [dA][dD] \;  e^{-\frac{N}{2} A^2  - \frac{N}{2} D^2  }  
  \left[  \frac{  \left(  z    -    D  \right)^N  }{ \left( z - \imath     A \right)^{N+1}  }
      + \frac{N+1}{N}  \frac{  \left(  z    -    D  \right)^{N-1}  } { \left( z -   \imath   A \right)^{N+2} }
  \right] \;. }
\]

\subsubsection{Inverse Laplace transform}

In order to compute the expectation of the Wilson loop, all that remains to be done is to compute the inverse Laplace transform of the resolvent:
\begin{align*}
 I(t,N) & = \int_{\gamma -\imath \infty}^{\gamma + \imath \infty} 
  \frac{ds}{2\pi \imath} \; e^{ts}  \omega_N \left( s \right) = \crcr
   & =\int_{\gamma -\imath \infty}^{\gamma + \imath \infty} 
  \frac{ ds }{2\pi \imath } \; e^{ts}   \int [dA][dD] \;  e^{-\frac{N}{2} A^2  - \frac{N}{2} D^2  }  
    \left[  \frac{  \left(  s    -  D  \right)^N  }{ \left( s - \imath    A \right)^{N+1}  }
      + \frac{N+1}{N} \frac{  \left( s   -    D  \right)^{N-1}  } { \left( s -   \imath   A \right)^{N+2} }
  \right] \;,
\end{align*}
with $\gamma$ a positive real number. The integrand has a pole on the imaginary axis in $s_0=\imath A$. Closing the contour by a semicircle at infinity 
in the left half complex plane and then shrinking to a circle $C$ of finite radius around $\imath A$, we obtain:
\begin{align*}
I(t,N) & =   \int [dAdD] \; e^{-\frac{N}{2}A^2 - \frac{N}{2}D^2} \int_{C} \frac{ds}{2\pi \imath }e^{itA} e^{t(s-iA)} \times \crcr
   & \qquad \qquad \times  \left[ 
   \frac{  \left(  s -\imath A + \imath A   -  D  \right)^N  }{ \left( s - \imath    A \right)^{N+1}  }
      + \frac{N+1}{N} \frac{  \left( s  -\imath A + \imath A     -    D  \right)^{N-1}  } { \left( s -   \imath   A \right)^{N+2} }
\right] \crcr
& = \int [dAdD] \;e^{-\frac{N}{2}A^2 - \frac{N}{2}D^2} \int_{C} \frac{ds}{2\pi \imath }e^{itA}  \times \crcr
&\qquad \qquad  \times \bigg[ \frac{1}{ \left( s - \imath    A \right)^{N+1} } \sum_{n\ge 0} \frac{t^n}{n!} \left(s-\imath A \right)^n \sum_{q=0}^{N} \binom{N}{q} (s-\imath A)^q (\imath A -D)^{N-q} +  \crcr
& \qquad \qquad \qquad  +  \frac{1}{ \left( s - \imath    A \right)^{N+2} }
   \sum_{n\ge 0} \frac{t^n}{n!} \left(s-\imath A \right)^n \frac{N+1}{N}\sum_{q=0}^{N-1} \binom{N-1}{q} (s-\imath A)^q (\imath A -D)^{N-1-q} 
\bigg] \;.
\end{align*}
The integral over $s$ can now be computed using the residue theorem:
\[I(t,N) =  \int [dAdD] \; e^{-\frac{N}{2}A^2 - \frac{N}{2}D^2}  e^{itA} \left[ \sum_{n=0}^N \frac{t^n}{n!} \binom{N}{N-n}  (\imath A -D)^{n}
+ \sum_{n=2}^{N+1} \frac{t^n}{n!} \frac{N+1}{N} \binom{N-1}{N+1-n} (\imath A -D)^{n-2}  \right] \;.
\]
Changing variables to $D = H + \imath A$ we get:
\begin{align*}
I(t,N) &  =     \int[dA] [dH] \;  e^{-\frac{N}{2}A^2 -\frac{N}{2}( H + \imath A )^2 + \imath t A}   \times \crcr
 & \qquad \qquad \times \left[ \sum_{n=0}^N \frac{t^n}{n!} \binom{N}{N-n} (-H)^{n}
+ \sum_{n=2}^{N+1} \frac{t^n}{n!} \frac{N+1}{N} \binom{N-1}{N+1-n}(-H)^{n-2}  \right] = \crcr
& = \int[dA] [dH] \; e^{ -\frac{N}{2} H^2  -  \imath NH A + \imath t A}  
 \left[ \sum_{n=0}^N \frac{t^n}{n!} \binom{N}{N-n} (- H ) ^{n}
+ \sum_{n=2}^{N+1} \frac{t^n}{n!} \frac{N+1}{N} \binom{N-1}{N+1-n} (-H)^{n-2}  \right] \;,
\end{align*}
which is (reinstating the normalizations of the measures on $A$ and $H$):
\begin{align*}
I(t,N) &  = \int \frac{N dAdH}{2\pi } \; e^{ - \frac{N}{2}H^2  -  \imath AN \left( H - \frac{t}{N}  \right) } \times \crcr
& \qquad \qquad \times  \left[ \sum_{n=0}^N \frac{t^n}{n!} \binom{N}{N-n} (-H)^{n}
+ \sum_{n=2}^{N+1} \frac{t^n}{n!} \frac{N+1}{N} \binom{N-1}{N+1-n}(-H)^{n-2}  \right] =\crcr
& = \int dH  \;  \delta\left( H - \frac{t}{N} \right) \;  e^{-\frac{N}{2}H^2}  \times \crcr
   & \qquad \qquad \times \left[ \sum_{n=0}^N \frac{t^n}{n!} \binom{N}{N-n} (-H)^{n}
+ \sum_{n=2}^{N+1} \frac{t^n}{n!} \frac{N+1}{N} \binom{N-1}{N+1-n} (-H)^{n-2}  \right] = \crcr
& =  e^{ -\frac{t^2}{2N} }   \left[ \sum_{n=0}^N \frac{t^n}{n!} \binom{N}{N-n} \left( -\frac{t}{N} \right)^{n}
+ \sum_{n=2}^{N+1} \frac{t^n}{n!} \frac{N+1}{N} \binom{N-1}{N+1-n}\left( -\frac{t}{N} \right)^{n-2}  \right] = \crcr
& =  e^{ -\frac{t^2}{2N} }  \left[ \sum_{q=0}^N (-t^2)^q\frac{1 }{ q!} \frac{1}{N^q} \binom{N}{q} 
   +\sum_{q=1}^N  (-t^2)^q  \frac{ -1}{(q+1)! } \frac{N+1}{N} \binom{N-1}{N-q} \frac{1}{N^{q-1}}     
\right] \;,
\end{align*}
where in the second sum $q = n-1$. Separating the term with $q=0$ and observing that the term with $q=N$ is zero we obtain:
\[
 I(t,N)   = e^{ -\frac{t^2}{2N} }  \left\{ 1 + \sum_{q=1}^{N-1} (-t^2)^q \left[\frac{1 }{ q!} \frac{1}{N^q} \binom{N}{q}  -  \frac{ 1}{(q+1)! } \frac{N+1}{N} \binom{N-1}{N-q} \frac{1}{N^{q-1}}  \right] \right\} \; ,
\]
and Theorem \ref{thm:WL} follows as:
\begin{align*}
&  \frac{1 }{ q!} \frac{1}{N^q} \binom{N}{q}  -  \frac{ 1}{(q+1)! } \frac{N+1}{N} \binom{N-1}{N-q} \frac{1}{N^{q-1}}  =
 \frac{ N!  }{ q! N^q q! (N-q)! } - \frac{ (N+1) (N-1)! }{ (q+1)! N (N-q)! (q-1)! N^{q-1} } = \crcr
& = \frac{ (N-1)!}{ q! N^q (q-1)! (N-q)! } \left[ \frac{N}{q} - \frac{N+1}{ q+1 }\right] = \frac{ (N-1)!}{ q! N^q (q-1)! (N-q)! } \; \frac{ (N-q) }{q(q+1)} 
 = \frac{ (N-1)! }{ q! (q+1)! N^q (N-1-q)  } =\crcr
 & = \frac{1}{q!} \frac{1}{N^{q+1}} \binom{N}{q+1} \; .
\end{align*}

\qed

\section{The spectral density}\label{sec:spectden}

The spectral density in the GUE \cite{DiFrancesco:1993nw} is the Fourier transform in the sense of distributions of the expectation of the Wilson loop:
\[
 I(t,N)  = \Braket{\frac{1}{N} \Tr\left[ e^{\imath t H } \right] }_{GUE} = \int d\lambda \; \rho_N(\lambda) e^{\imath t \lambda}  \Rightarrow \rho_{N}(\lambda)  = \frac{1}{2\pi} \int_{-\infty}^{\infty} dt \;   e^{-\imath \lambda t}  I(t,N) \; .
\]
This formula can also be recovered by noting that the spectral density is the discontinuity of the resolvent:
\begin{align*}
& \rho_{N}(\lambda) = \frac{1}{2\pi \imath} \lim_{\epsilon\to 0} 
 \left[ \frac{1}{N} \Braket{ \frac{1}{H - \lambda - \imath \epsilon} }_{GUE} - \frac{1}{N} \Braket{ \frac{1}{H - \lambda + \imath \epsilon}  }_{GUE} \right]  = \crcr
& = \frac{1}{2\pi} \lim_{\epsilon\to 0} \left[ \omega_N(\epsilon+\imath \lambda) - \omega_N(- \epsilon+\imath \lambda)  \right]  
  = \frac{1}{2\pi} \lim_{\epsilon\to 0}   \int_{0}^{\infty} dt \;  e^{-\epsilon t} \left[  e^{-\imath \lambda t} +  e^{ \imath \lambda t} \right]I(t,N) =\crcr
& =\frac{1}{2\pi} \int_{-\infty}^{\infty} dt \;   e^{-\imath \lambda t}  I(t,N) \; .
\end{align*}
Using Theorem \ref{thm:WL}, the spectral density at finite $N$ writes:
\[
\boxed{  \rho_N(\lambda)   = \frac{1}{2\pi} \sum_{q=0}^{N-1} \frac{1}{N^{q+1}} \binom{N}{q+1} \frac{1}{q!}
    \int_{-\infty}^{\infty} d t\;  e^{\imath t \lambda  -\frac{t^2}{2N} }  (-t^2)^q  
  =   \sum_{q=0}^{N-1}\frac{1}{N^{q+1}} \binom{N}{q+1} \frac{1}{q!}  \left( \frac{\partial}{ \partial \lambda} \right)^{2q}
     \left[ \sqrt{\frac{ N}{2\pi} }  e^{-\frac{N\lambda^2}{2}}\right] \;, }
\]
which can be further written in terms of Hermite functions.

Albeit this formula is exact, it is somewhat difficult to handle. In  particular, the large $N$ limit and the $1/N$ expansion are not obvious. In order to study the spectral density 
at finite $N$ in more detail, let us compute it moments (expectations of monomials in $\lambda$). The odd moments are zero, while the even ones are:
\begin{align}\label{eq:momecomplet}
   \Braket{ \frac{ \Tr \left[H^{2l} \right] }{N} }_{GUE} & =\Braket{\lambda^{2l}} = \int d\lambda \;\; \rho_N(\lambda) \lambda^{2l}  =\crcr
   & = \sum_{q_2=0}^{N-1}\frac{1}{N^{q_2+1}} \binom{N}{q_2+1} \frac{1}{q_2!}  \int d\lambda \; 
  \left\{ \left( \frac{\partial}{ \partial \lambda} \right)^{2q_2}
     \left[ \sqrt{\frac{ N}{2\pi} }  e^{-\frac{N\lambda^2}{2}}\right] \right\} \lambda^{2l}  = \crcr
     & = \sum_{q_2=0}^{\min\{l,N-1\}}\frac{1}{N^{q_2+1}} \binom{N}{q_2+1} \frac{1}{q_2!}  \int d\lambda \;    
     \sqrt{\frac{ N}{2\pi} }  e^{-\frac{N\lambda^2}{2}}  \frac{(2l)!}{(2l-2q_2)!} \lambda^{2l-2q_2}  = \crcr
     & = \sum_{q_2=0}^{\min\{l,N-1\}}\frac{1}{N^{q_2+1}} \binom{N}{q_2+1} \frac{1}{q_2!}   \frac{(2l)!} {2^{l-q_2} (l-q_2)! } \frac{1}{N^{l-q_2}} \; .
\end{align}
In this form one can for instance recover the large $N$ limit of the moments:
\[
 \lim_{N\to \infty}  \Braket{ \frac{ \Tr \left[H^{2l} \right] }{N} }_{GUE} = \frac{(2l)!}{l!(l+1)!} \;,
\]
reproducing, as expected, the moments of the Wigner semicircle distribution  $\rho_{\infty}(\lambda) = \frac{1}{2\pi} \sqrt{4-\lambda^2}$.
However, the $1/N$ expansion is somewhat involved, as the expansion of the binomial coefficient in $1/N$ is alternated. For instance 
all the odd powers of $1/N$ in Eq.~\eqref{eq:momecomplet} cancel out, which seems somewhat mysterious in the equation as written.

In order to recover the $1/N$ expansion of 
the moments as a series in $1/N$ with positive coefficients we use the following identity:
\[
\int_{C_R} \frac{dw}{2\pi \imath} \;\; \frac{1}{2} \left( \frac{w +\frac{1}{N}}{ w-\frac{1}{N} } \right)^N \left(  w-\frac{1}{N} \right)^{q_2}
  = \begin{cases}
    \frac{2^{q_2}}{ N^{q_2+1} } \binom{N}{q_2+1} \; ,  \qquad  & q_2 \le N-1 \\
    0 \; ,  \qquad  & q_2 \ge  N
     \end{cases} \;,
\]
where $C_R$ is a circle of radius $R>1$ in the complex plane centered at $1/N$. While for now this  representation is just a convenient trick, it is
natural in view of Section \ref{ssec:HZ}.
The even moments at finite $N$ admit the following integral representation:
\begin{align*}
  \Braket{ \frac{ \Tr \left[H^{2l} \right] }{N} }_{GUE} & = \frac{(2l)!}{2^ll!} 
  \int_{C_R} \frac{dw}{2\pi \imath} \;\; \frac{1}{2} \left( \frac{w +\frac{1}{N}}{ w-\frac{1}{N} } \right)^N
  \sum_{q_2=0}^l \binom{l}{q_2} \left(  w-\frac{1}{N} \right)^{q_2} \frac{1}{N^{l-q_2}} = \crcr
    & = \frac{(2l)!}{2^ll!} 
  \int_{C_R} \frac{dw}{2\pi \imath} \;\; \frac{1}{2} \left( \frac{ 1 +\frac{1}{wN}}{ 1-\frac{1}{wN} } \right)^N w^l \;.
\end{align*}
This integral can now be expanded in $1/N$ using:
\[
  \left( \frac{1+\frac{1}{wN}}{1-\frac{1}{wN}} \right)^N   = e^{N \left[ \ln ( 1+\frac{1}{wN} )- \ln ( 1-\frac{1}{wN} ) \right]}
 = e^{ \sum_{q\ge 0} \frac{1}{N^{2q}} \frac{2}{2q+1} \frac{1}{w^{2q+1}} } \;,
\]
to obtain:
\begin{align}\label{eq:expmome}
 \Braket{ \frac{ \Tr \left[H^{2l} \right] }{N} }_{GUE} & = \frac{(2l)!}{2^ll!}  \sum_{g\ge 0} \frac{1}{N^{2g}} 
  \int_{C_R} \frac{dw}{2\pi \imath} \;\;  \frac{1}{2} 
 \left[   \sum_{k_q\ge 0 }^{\sum qk_q = g} \prod_{q\ge 0}  \frac{1}{   k_q!}  \left( \frac{2}{(2q+1)w^{2q+1}} \right)^{k_q} \right] w^l =\crcr
 & =  \frac{(2l)!}{l!}  \sum_{g\ge 0} \frac{ 1 }{2^{2g} N^{2g}} 
 \left[   
  \sum_{k_q\ge 0 }^{\genfrac{}{}{0pt}{}{ \sum qk_q = g }{ \sum_q k_q = l-2g+1 } } \prod_{q\ge 0}  \frac{1}{   k_q!}    
  \frac{1}{ (2q+1)^{k_q} }   
 \right] \;. 
\end{align}
 
\section{Connection with enumerative combinatorics}\label{sec:combi}
 
In this section we derive Theorem \ref{thm:WL} using enumerative combinatorics techniques and then use it 
to find a novel proof of the Harer-Zagier series formula.
 
\subsection{A BEST theorem for undirected multi-graphs}\label{ssec:BEST}

A standard manipulation in field theory consists in representing Gaussian integrals as differential operators. In particular the expectation
of a monomial in $H$ in the GUE is:
\[
   \Braket{ \frac{ \Tr \left[H^{2l} \right] }{N} }_{GUE}   = \left[ e^{\frac{1}{2 N} \sum_{a,b=1}^N \frac{\partial}{\partial H_{ab}}   \frac{\partial}{\partial H_{ba}} } 
   \frac{ \Tr \left[H^{2l} \right] }{N}\right]_{H=0} =
   \left[ \frac{1}{l!} \frac{1}{  N^{l+1}}
   \left(\frac{1}{2} \sum_{a,b=1}^N \frac{\partial}{\partial H_{ab}}   \frac{\partial}{\partial H_{ba}} \right)^l 
   \Tr \left[H^{2l} \right] 
   \right]_{H=0} \;.
\]
Combining this with Eq.~\eqref{eq:momecomplet}  yields:
\begin{align}\label{eq:initial}
  \frac{1}{l!}\left[   \left( \frac{1}{2}\sum_{a,b=1}^N \frac{\partial}{\partial H_{ab}}   \frac{\partial}{\partial H_{ba}} \right)^l 
   \Tr \left[H^{2l} \right] 
   \right]_{H=0} = \sum_{q_2=0}^{\min\{l,N-1\}} \binom{N}{q_2+1}  \; \;   \frac{(2l)!}{2^{l-q_2} q_2! (l-q_2)! } \;.
\end{align}
Both sides of this equation admit a combinatorial interpretation.

\paragraph{The left hand side.}
Let us consider $N$ vertices labeled $1,\dots N$. An \emph{undirected multi graph} over a subset of the vertices 
$\{1,\dots N\}$ is identified by the multiplicities $l^G_{ab}\ge 1$ of the edges $\{a,b\}=\{b,a\}$:
 \[
  G = \bigg{\{} \{a,b\}^{l^G_{ab}} \; | \;  \{ a,b \} \subset \{1,\dots N\}  \bigg\} \;.
 \]
Both multiple edges and self loops are allowed. The multiplicities of the edges can be completed to a list $l^G_{pq}$ for any pair 
$\{p,q\} \subset\{1,\dots N\}$  by setting $l^G_{pq}=0$ if there is no edge in $G$ connecting $p$ and $q$.
We denote $E(G)$ the (multi) set of the edges of $G$ and $V(G) \subset \{1,\dots N\}$ the set of vertices of $G$.

Let us denote $ \frac{\partial}{\partial H_{ab}}  \equiv \partial_{ab} $.
The differential operator in the left hand side of Eq.~\eqref{eq:initial} can be expanded in multi graphs $G$:
\begin{align*}
 \frac{1}{l!} \left(\frac{1}{2}  \sum_{a,b=1}^N\partial_{ab}  \partial_{ba}  \right)^{l} & = \left(\sum_{a=1}^N \frac{1}{2} (\partial_{aa})^2 + \sum_{1\le a<b\le N} \partial_{ab} \partial_{ba} \right)^l  =
 \crcr
 & =\sum_{G}^{|E(G)|=l}    \left(  \prod_{a < b}  \frac{1}{l^G_{ab}!} \left( \partial_{ab} \partial_{ba} \right)^{l^G_{ab}}  \right)  
 \left( \prod_{a} \frac{1}{l^G_{aa}!} \left( \frac{1}{2} (\partial_{aa})^2  \right)^{l^G_{aa}} \right) \;.
\end{align*}
Let us reorganize the sum by the number $|V(G)|= q_2+1$ of vertices of $G$:
\[
 \frac{1}{l!} \left(\frac{1}{2}  \sum_{a,b=1}^N\partial_{ab}  \partial_{ba}  \right)^{l}  
 = \sum_{q_2 = 0}^{\min\{l,N-1\}} \sum_{G}^{\genfrac{}{}{0pt}{}{|V(G)| = q_2+1}{|E(G)| = l }} 
     \left(  \prod_{a < b}  \frac{1}{l^G_{ab}!} \left( \partial_{ab} \partial_{ba} \right)^{l^G_{ab}}  \right) 
 \left( \prod_{a} \frac{1}{l^G_{aa}!} \left( \frac{1}{2} (\partial_{aa})^2  \right)^{l^G_{aa}} \right) \; .
 \]

Let us associate to any undirected multi graph $G$ the \emph{directed multi graph} ${\rm di}(G)$ obtained by splitting all the undirected edges $\{a,b\}$ of $G$ into a pair of directed arcs $(a,b)$ and $(b,a)$. 
The number of arcs of ${\rm di}(G) $ is twice the number of edges of $G$. Self loops $\{a,a\}$ in $G$ are also split in $ {\rm di}(G)$ into pairs of directed arcs $(a,a)$ and $(a,a)$.
The arcs of $ {\rm di}(G)$ are canonically paired  into pairs corresponding to the edges of $G$.

An \emph{Eulerian cycle} in ${\rm di}(G)$ is a cycle which visits every arc of ${\rm di}(G)$  exactly once respecting its orientation.
Such a cycle can be rooted by fixing its first step to be a certain fixed arc.
We denote $N^{{\rm E- di}}(G)$ the number of Eulerian cycles of ${\rm di}(G)$.

 \begin{proposition}
 For any multi graph $G$ with $l$ edges we have:
\[
   \left(  \prod_{a < b}  \frac{1}{l^G_{ab}!} \left( \partial_{ab} \partial_{ba} \right)^{l^G_{ab}}  \right) 
 \left( \prod_{a} \frac{1}{l^G_{aa}!} \left( \frac{1}{2} (\partial_{aa})^2  \right)^{l^G_{aa}} \right) \Tr[H^{2l}]    =  N^{\rm E-di}(G) \;.
\]
\end{proposition}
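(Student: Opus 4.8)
The plan is to expand both sides over closed walks in the complete graph on $\{1,\dots,N\}$ and match them term by term. Denote by $D_G$ the differential operator on the left-hand side. First I would write
\[
\Tr[H^{2l}] = \sum_{i_1,\dots,i_{2l}=1}^N H_{i_1 i_2}H_{i_2 i_3}\cdots H_{i_{2l}i_1},\qquad i_{2l+1}\equiv i_1,
\]
so that $\Tr[H^{2l}]$ is a sum of degree-$2l$ monomials with one factor $H_{i_k i_{k+1}}$ per cyclic position $k$, the position $1$ being marked. Writing $n_{xy}$ for the number of arcs $(x,y)$ of ${\rm di}(G)$ (so $n_{xy}=l^G_{xy}$ for $x\ne y$ and $n_{aa}=2l^G_{aa}$), the operator is $D_G=\left(\prod_{a<b}\frac{1}{l^G_{ab}!}\right)\left(\prod_a\frac{1}{2^{l^G_{aa}}l^G_{aa}!}\right)\prod_{x,y}\partial_{xy}^{\,n_{xy}}$.

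Next I would apply $\prod_{x,y}\partial_{xy}^{\,n_{xy}}$ and set $H=0$. Because this is a product of exactly $2l$ first-order derivatives acting on a product of $2l$ linear factors, only the terms in which each derivative hits a distinct factor survive (any doubly-hit factor leaves some $H$ that vanishes at $0$), and $\partial_{xy}$ acting on the factor at position $k$ forces $i_k=x,\ i_{k+1}=y$. Since the independent variables $H_{xy}$ factorize, this yields
\[
\Big(\prod_{x,y}\partial_{xy}^{\,n_{xy}}\Big)\Tr[H^{2l}]\Big|_{H=0}=\Big(\prod_{x,y}n_{xy}!\Big)\,\nu(G),
\]
where $\nu(G)$ is the number of index sequences $(i_1,\dots,i_{2l})$ whose cyclic list of pairs $(i_k,i_{k+1})$ realizes exactly the arc multiset of ${\rm di}(G)$.

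The geometric point is that assigning an arc to each position so that the head of the arc at $k$ equals the tail of the arc at $k+1$ is nothing but a closed walk using every arc of ${\rm di}(G)$ once, i.e.\ an Eulerian cycle, rooted at the marked starting position; hence $\nu(G)$ counts such cycles with the parallel arcs of a common edge left unlabelled. It then remains to check that the prefactor surviving in $\left(\prod_{a<b}\frac{1}{l^G_{ab}!}\right)\left(\prod_a\frac{1}{2^{l^G_{aa}}l^G_{aa}!}\right)\prod_{x,y}n_{xy}!=\left(\prod_{a<b}l^G_{ab}!\right)\left(\prod_a\frac{(2l^G_{aa})!}{2^{l^G_{aa}}l^G_{aa}!}\right)$ is exactly the number of ways to promote such an unlabelled walk to a genuine rooted Eulerian cycle of ${\rm di}(G)$: the factor $\prod_{a<b}l^G_{ab}!$ records which parallel edge realizes each traversal of a multi-edge, while $\frac{(2l^G_{aa})!}{2^{l^G_{aa}}l^G_{aa}!}=(2l^G_{aa}-1)!!$ pairs the $2l^G_{aa}$ visits to a self-loop into its $l^G_{aa}$ edges. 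This identifies $D_G\,\Tr[H^{2l}]\big|_{H=0}$ with $N^{\rm E-di}(G)$.

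The main obstacle is precisely this last bookkeeping, and above all the self-loops: the factor $\tfrac12$ attached to $(\partial_{aa})^2$ encodes that the two arcs produced by splitting a self-loop are interchangeable, so that a single self-loop must contribute $1$ rather than $2$. I would pin the rooting convention down on the elementary cases — a single edge, a single self-loop, and a double edge, which give $2$, $1$ and $4$ respectively — and then verify that the general combinatorial factor above counts rooted Eulerian cycles with weight one, respecting the canonical pairing of the arcs of ${\rm di}(G)$ into the edges of $G$.
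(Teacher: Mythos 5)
Your proposal is correct and follows essentially the same route as the paper's proof: expand $\Tr[H^{2l}]$ into monomials and read each surviving derivative contraction as a closed walk traversing every arc of ${\rm di}(G)$ exactly once, i.e.\ a rooted Eulerian cycle. You are in fact more explicit than the paper's two-sentence argument about the multiplicity bookkeeping --- the normalizations $1/l^G_{ab}!$ and $1/(2^{l^G_{aa}}l^G_{aa}!)$ against the assignment count $\prod_{x,y}n_{xy}!$, and the convention that parallel arcs are unlabelled but canonically paired into edges --- and your check values $2$, $1$, $4$ for a single edge, a self-loop and a double edge are exactly right.
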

 \begin{proof} Consider a term with fixed indices in the trace. The first $H$ in this term has a first index, say $a$. 
 Some derivative must act on it to go to its second index, say $b$ which is identical with the first index on the second $H$
and some derivative must act on the second $H$ and so on. Hence any two indices in $\Tr[H^n]$ are connected by a path of 
oriented arcs $(a,b)$ and every edge $\{a,b\}$ in $G$ is encountered exactly twice in the derivative operator, once corresponding to the arc $(a,b)$ and
once corresponding to the arc $(b,a)$.

\end{proof}

Observe that Eulerian cycles in ${\rm di}(G)$ exist if and only if $G$ is connected.

In particular, the result of acting with the derivatives corresponding to a multi graph on a trace does not depend on the labels of its vertices: given a connected graph with $q_2+1$ vertices, there are
$\binom{N}{q_2+1}$ labeling of the vertices which lead to the same result. The left hand side in Eq.~\eqref{eq:initial} is then:
\[
 \sum_{q_2 = 0}^{\min\{l,N-1\}} \binom{N}{q_2+1} \sum_{G\; {\rm connected}}^{\genfrac{}{}{0pt}{}{V(G) =\{1,\dots   q_2+1\} }{|E(G)| = l }}  N_r^{\rm E-di}(G) \;,
\]
where the sum runs over multi graphs whose vertices have fixed labels $\{1,\dots q_2+1 \}$.

\paragraph{The right hand side.}
The right hand side of Eq.~\eqref{eq:initial}:
\[
 \sum_{q_2=0}^{\min\{l,N-1\}} \binom{N}{q_2+1}  \; \;   \frac{(2l)!}{2^{l-q_2} q_2! (l-q_2)! } \;,
\]
also has a combinatorial interpretation. Indeed, $ \frac{(2q_2)!}{q_2!}$
is the number of plane trees with vertices labels $\{1,\dots q_2\}$, and:
\[
\frac{1}{2^{l-q_2} (l-q_2)! }  \frac{(2l)!}{(2q_2)!}\;,
\]
counts the number of ways to  decorate a plane tree by $(l-q_2)$ excess loop edges to obtain a \emph{combinatorial map} $M$, hence:
\[
 \frac{(2l)!}{2^{l-q_2} q_2! (l-q_2)! }  = \sum_{M \; {\rm connected}}^{\genfrac{}{}{0pt}{}{V(M) = \{1,\dots q_2+1\}}{|E(M)|=l}} 1\;.
\]

A rooted combinatorial map is a combinatorial map with a marked edge and an orientation chosen for this edge.
Any combinatorial map $M$ has a canonically associated undirected multi graph ${\rm Gr}(M)$ 
with the same vertices and the same edges as $M$: ${\rm Gr}(M)$ ignores the order of the half edges  around the vertices of $M$.

It follows that Eq.~\eqref{eq:initial}, which we proved by a direct computation, is implied by the following (stronger) combinatorial result.

\begin{proposition}
Let $G$ be an undirected multi graph with vertices labeled $\{1,\dots q_2+1\}$. There is a one to one correspondence 
between the Eulerian cycles in ${\rm di} \left(G\right)$ rooted at $(a,b)$ and the combinatorial maps 
$M$ with ${\rm Gr}(M) = G$ rooted at $(a,b)$ with a chosen (plane) spanning tree $T$.
\end{proposition}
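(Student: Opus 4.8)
The statement is the bijective core of the BEST theorem transported to the doubled graph ${\rm di}(G)$, so the plan is to set up a dictionary between the data on the two sides and exhibit two mutually inverse constructions. The key observation is that the darts (half-edges) of a combinatorial map $M$ with ${\rm Gr}(M)=G$ are exactly the arcs of ${\rm di}(G)$: the half-edge of $\{a,b\}$ sitting at $a$ is the out-arc $(a,b)$, the edge involution is arc reversal $(a,b)\mapsto(b,a)$, and the rotation $\sigma_v$ of $M$ at a vertex $v$ is precisely a cyclic order of the out-arcs leaving $v$. Under this dictionary a spanning tree $T$ rooted at $a$ (the tail of the root arc) is the same datum as a spanning arborescence of ${\rm di}(G)$ oriented towards $a$: a choice, for each $v\neq a$, of one tree out-arc $T(v)=(v,p(v))$ pointing to the parent. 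The rotations carry $\prod_v(\deg_G(v)-1)!$ choices and the arborescences towards $a$ are exactly the spanning trees of $G$, so a bijection will in particular reprove the BEST product formula.

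Going from an Eulerian cycle to a pair $(M,T)$, I would first record, at each vertex $v$, the order in which the cycle departs from $v$; this is a list of all $\deg_G(v)$ out-arcs at $v$, and reading it cyclically defines the rotation $\sigma_v$, hence a map $M$ carrying the prescribed root arc $(a,b)$. Then I would set $T(v)$ to be the \emph{last-exit} arc, the out-arc by which the cycle leaves $v$ for the last time, for every $v\neq a$. The one substantial point here is the claim that these $q_2$ last-exit arcs form an arborescence towards $a$ (equivalently a spanning tree of $G$): following last-exit arcs can never close a directed cycle, since on such a cycle the vertex whose last departure comes earliest would be immediately followed by a still later last departure from the next vertex, a contradiction; and with no cycle present the $q_2$ arcs necessarily connect every vertex to $a$.

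Going back from $(M,T)$ to an Eulerian cycle, I would linearise each rotation $\sigma_v$ into a departure list by cutting it so that the tree arc $T(v)$ comes last (at the root $a$ the list is cut to start with $(a,b)$), and then run the greedy trail that starts along $(a,b)$ and, upon each arrival at a vertex, leaves along that vertex's next unused out-arc. The decisive lemma, and the main obstacle of the whole proof, is that this trail is a single Eulerian cycle rather than a union of shorter closed trails. This is exactly where the spanning tree is used: by the in/out degree balance of ${\rm di}(G)$ the trail can only get stuck back at $a$, and if some arc remained unused then some vertex would retain an unused out-arc, hence (its tree arc being scheduled last) an unused tree arc; walking up the tree $v\to p(v)\to\cdots\to a$ one reaches a vertex whose out-arcs are all used while one of its incoming tree arcs is not, contradicting degree balance. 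Finally I would verify that the two constructions are inverse — reading off departures from the reconstructed trail returns $(\sigma_v)$ and its last-exit arcs return $T$ — and note that self-loops and multiple edges cause no trouble, since the corresponding darts are distinguishable arcs of ${\rm di}(G)$, while the plane structure on $T$ is simply the one induced by the rotations of $M$.
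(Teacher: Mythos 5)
Your proof is correct and follows essentially the same route as the paper's own argument: the rotation at each vertex is read off from the order in which the Eulerian cycle departs that vertex, the spanning tree is the set of last-exit arcs, and the inverse map is the greedy trail that schedules each vertex's tree arc last (the paper phrases this as taking the first available outgoing side counted counterclockwise from the tree edge pointing toward the root). You additionally supply the two verifications --- that the last-exit arcs form an arborescence toward $a$, and that the greedy trail cannot get stuck before exhausting all arcs --- which the paper compresses into ``it is easy to show\dots by induction starting from the root vertex.''
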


\begin{proof}

 The proof of this theorem is very close to the proof of the BEST theorem \cite{aardenne1951circuits,tutte1941unicursal} in combinatorics.
 If $G$ has multiple edges and self loops, one first labels the multiple edges and self loops and then proceeds.  An example of the bijection described 
 below is presented in Fig.~ \ref{fig:Example}.

 \begin{figure}[htb]
\begin{center}
\includegraphics[width=12cm]{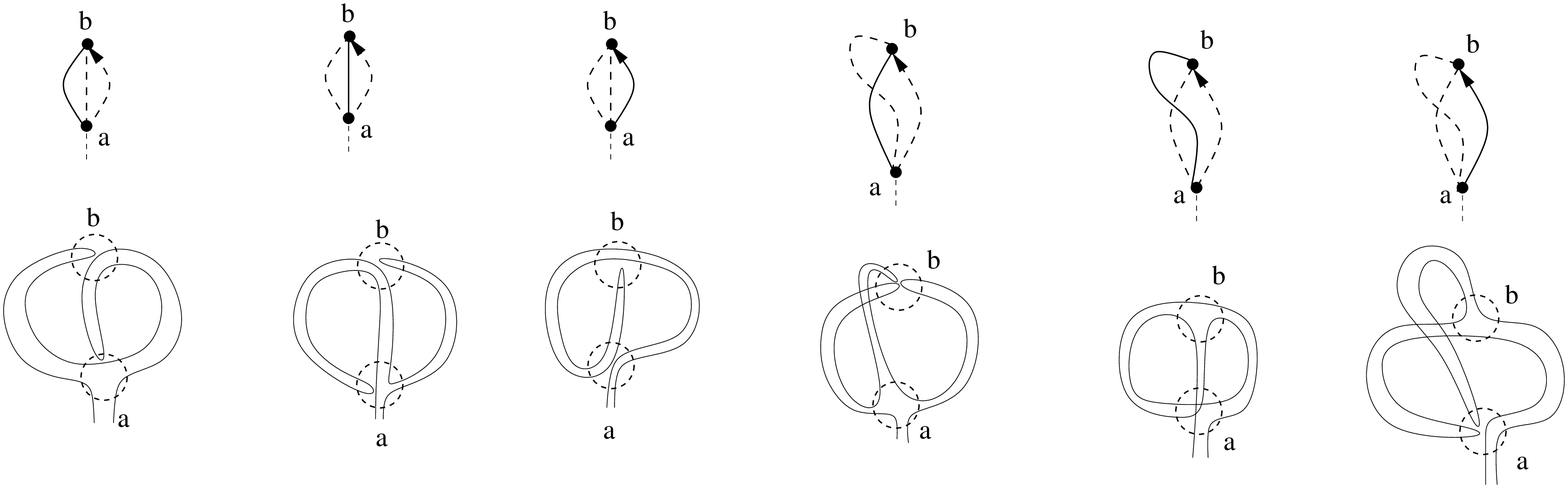}  
\caption{Maps $M$ with ${\rm Gr}(M)=G$ with chosen trees $T$ and Eulerian paths in ${\rm di}(G)$. The tree edges are solid, the loop edges are dashed and the 
root edge is indicated by an arrow.\label{fig:Example}} 
\end{center}
\end{figure}  
 
{\bf From a map $M$ (with ${\rm Gr}(M)=G$) and a tree $T$ to an Eulerian cycle in ${\rm di}(G)$.} 
A clockwise orientation of the faces of the map induces opposite orientations on the sides of the edges. Around any vertex, the sides of the edges 
are alternatively outgoing and incoming \emph{i.e.}, turning counterclockwise around a vertex, the outgoing side of the edge $e$ is followed by the incoming side of the edge $e$ which is followed by the outgoing side of the successor of $e$ and so on.
Stepping along the sides of the edges in $M$ is equivalent to stepping along the corresponding arcs in ${\rm di}(G)$. 
Start by stepping along the side $(a,b)$ of the root edge, from $a$ to $b$. At all the subsequent steps, leave the current vertex $v$ on the first available outgoing side of an edge 
(\emph{i.e.} not yet used in the cycle) counted counterclockwise starting from the incoming side of the unique edge in $T$  hooked to $v$ which goes towards the root vertex $a$ 
(or starting from the incoming side of the root edge $(a,b)$ itself if $v=a$).
It is easy to show that the cycle thus obtained is Eulerian in ${\rm Gr}(M)$ by induction starting from the root vertex.

 {\bf From a cycle in ${\rm di}(G)$ to $M$ and $T$.} List the arcs in the cycle in the order they are encountered:
  \[
   (a,b) (b,c) \dots \;.
  \]
  The edges of $M$ are the canonical pairs of arcs of ${\rm di}(G)$.
  Build the vertices of $M$ by adding the edges around a vertex $v$ in the order in which the outgoing arcs $(v,\cdot)$ are encountered
  in the list. Declare the last edge in the list which exits $v$ as the tree edge hooked to $v$ which goes towards the root.

  \end{proof}

 \subsection{The Harer-Zagier series formula}\label{ssec:HZ}
 
One can attempt to compute the expectation of the Wilson loop directly. Expanding the exponential in Eq.~\eqref{eq:WilsonLoop} and taking into account that a Gaussian integral with an odd number of insertions is $0$ gives:
\begin{equation}\label{eq:first}
  I(t,N) = \sum_{l\ge 0} \frac{ (-t^{2})^{l} }{(2l)!}   \; \int [dH]  \; e^{-\frac{N}{2} \Tr H^2}
    \; \frac{1}{N  } \Tr(H^{2l})  \; .
\end{equation}

 This Gaussian integral can be evaluated \cite{DiFrancesco:1993nw} as a sum over Feynman graphs which in the case of the GUE are rooted combinatorial maps.
 Each map has only one vertex of coordination $2 l$ (hence $l$ edges) and
 $F = 1+l -2g$ faces, where $g$ is the genus of the map.  Such maps are sometimes called rooted \emph{rosettes}. 
 Each rosette contributes to the evaluation of Eq.~\eqref{eq:first} a term $N^{-2g}$
therefore, denoting $C_g(l)$ the number of rooted rosettes with a vertex of coordination $2l$ and genus $g$, we have:
\[
 I(t,N)   =
 \sum_{l\ge 0} \frac{(-t^{2})^{l} }{(2l)!}   \sum_{g\ge 0} N^{-2g} C_g(l)    = 1+ \sum_{g\ge 0, l>0} \frac{(-t^{2})^{l}  }{2^l l!}  \;  \frac{ C_g(l) }{(2l-1)!!}\frac{1}{N^{2g}} \;.
\]

Concerning the numbers $C_g(l)$, it is well known that $\sum_{g\ge 0} C_g(l) = (2l-1)!!$ and $C_0(l) = \frac{1}{l+1} \binom{2l}{l}$
which in particular implies that for all $N\ge 1$ and for all complex $t$, $|I(t,N)| \le e^{\frac{1}{2}| t|^2} $ (a much weaker bound than the one we derived starting from the exact formula).
Eq.~\eqref{eq:expmome} yields an exact formula for $C_g(l)$:
\[
 C_g(l) = \frac{(2l)!}{ l!  2^{2g}} \sum_{k_q\ge 0 }^{ \genfrac{}{}{0pt}{}{\sum qk_q = g} {\sum k_q = l-2g+1}} \prod_{q\ge 0}  \frac{ 1}{   k_q! (2q+1)^{k_q}} \;,
\]
which coincides with Proposition 6 of  \cite{DBLP:journals/jct/ChapuyFF13} upon summing over $k_0$.
Let us denote $f(x,N)$ the generating function of the numbers $C_g(l)$:
\[
 f(x,N) = \sum_{p > 0,\sum g \ge 0}\frac{C_g(p)}{(2p-1)!!} x^{p+1}  \frac{1}{N^{2g}} \;.
\]
The main remark is that $I(\imath \sqrt{2u},N) -1$ is the Borel transform of $ \frac{1}{x } f(x,N)$, therefore: 
  \[
  \frac{1}{x} f(x,N)  =  \frac{1}{x} \int_0^{\infty} du \; e^{-\frac{u}{x} } \left( I(\imath \sqrt{2u}) -1\right) \;,
 \]
and using Theorem \ref{thm:WL} we obtain:
 \begin{align*}
f(x,N) & = \int_0^{\infty} du \; e^{-\frac{u}{x} } \left[ e^{\frac{u}{N}}  \sum_{q=0}^{N-1}  \frac{1}{N^{q+1}} \binom{N}{q+1} \frac{ 2^q u^q } {q!}   -1\right] 
    =  \sum_{q=0}^{N-1}  \frac{1}{N^{q+1}} \binom{N}{q+1} \frac{2^q}{q!} \frac{1}{\left( \frac{1}{x}-\frac{1}{N} \right)^{q+1}} q! - x\crcr
   & = \frac{1}{2} \sum_{q=0}^{N-1} \binom{N}{q+1} \left(  \frac{2x}{N} \right)^{q+1} \frac{1}{ \left( 1 - \frac{x}{N} \right)^{q+1} } -x  
   = \frac{1}{2} \left(  1+  \frac{ \frac{2x}{N}}{1 -\frac{x}{N}}   \right)^{N} - \frac{1}{2} - x \crcr
   & =\frac{1}{2}  \left(   \frac{1+\frac{x}{N} } {1 - \frac{x}{N} } \right)^N -\frac{1}{2} - x\;,
 \end{align*}
 reproducing the Harer-Zagier \cite{harer1986euler} series formula.

\bibliography{/home/razvan/Desktop/lucru/Ongoing/Refs/Refs.bib}{}

\end{document}